\newtheorem{theorem}{Theorem}
\newtheorem{lemma}[theorem]{Lemma}
\newtheorem{proposition}[theorem]{Proposition}
\theoremstyle{definition}
\newtheorem{problem}[theorem]{Problem}
\title{Sketching with Kerdock's crayons:\\ Fast sparsifying transforms for arbitrary linear maps}
\author{
Tim~Fuchs\footnote{Department of Mathematics, Technical University of Munich, M\"{u}nchen, Germany}
\qquad
David~Gross\footnote{Institute for Theoretical Physics, University of Cologne, Cologne, Germany}
\qquad
Felix~Krahmer\footnotemark[1]
\vspace{0.1in}\\
Richard~Kueng\footnote{Institute for Integrated Circuits, Johannes Kepler University Linz, Linz, Austria}
\qquad
Dustin~G.~Mixon\footnote{Department of Mathematics, The Ohio State University, Columbus, Ohio, USA} \footnote{Translational Data Analytics Institute, The Ohio State University, Columbus, Ohio, USA}
}
\date{}
\begin{document}
\maketitle

\begin{abstract}
Given an arbitrary matrix $A\in\mathbb{R}^{n\times n}$, we consider the fundamental problem of computing $Ax$ for any $x\in\mathbb{R}^n$ such that $Ax$ is $s$-sparse.
While fast algorithms exist for particular choices of $A$, such as the discrete Fourier transform, there is currently no $o(n^2)$ algorithm that treats the unstructured case.
In this paper, we devise a randomized approach to tackle the unstructured case. 
Our method relies on a representation of $A$ in terms of certain real-valued mutually unbiased bases derived from Kerdock sets. 
In the preprocessing phase of our algorithm, we compute this representation of $A$ in $O(n^3\log n)$ operations.
Next, given any unit vector $x\in\mathbb{R}^n$ such that $Ax$ is $s$-sparse, our randomized fast transform uses this representation of $A$ to compute the entrywise $\epsilon$-hard threshold of $Ax$ with high probability in only $O(sn + \epsilon^{-2}\|A\|_{2\to\infty}^2n\log n)$ operations.
In addition to a performance guarantee, we provide numerical results that demonstrate the plausibility of real-world implementation of our algorithm.
\end{abstract}

\section{Introduction}

Computing matrix--vector products is a fundamental part of numerical linear algebra.
The naive algorithm takes $O(mn)$ operations to multiply an $m\times n$ matrix by a vector.
Many structured matrices admit a more efficient implementation of this computation, the most well-known example being the fast Fourier transform, which takes only $O(n\log n)$ operations.
In some applications, the desired Fourier transform of the given signal is nearly $s$-sparse, and as we discuss below, a number of works have proposed methods for such cases that are sublinear in the dimension $n$.

For the one-dimensional discrete Fourier transform, a randomized algorithm with a runtime scaling quadratically in $s$ up to logarithmic factors in the dimension $n$ has been provided in \cite{Gilbert:02}, while a deterministic approach with similar complexity was found in \cite{Iwen:10, Iwen:13}.
In later works, this could be reduced to linear scaling in $s$ for both random \cite{Gilbert:05, Iwen:10, Al-Hassanieh:12, Al-Hassanieh:12b, Iwen:13} and deterministic \cite{Lawlor:13, Christlieb:16} algorithms.
For the $d$-dimensional Fourier transform applied to signals in $n=N^d$ dimensions, the exponential scaling in $d$ presents an instance of the curse of dimensionality.
Despite this, for random signals, one may obtain runtimes that are linear in $sd$ up to logarithmic factors in $N$ \cite{Choi:21, Choi:19}.
For deterministic signals, various deterministic \cite{Morotti:17, Iwen:13} and random \cite{Indyk:14, Kammerer:21, Choi:18, Choi:19b, Kammerer:20} sampling strategies have been proposed with a computational complexity which scales polynomially in $d$, $s$ and $N$ up to logarithmic factors.

Naturally, research on fast transforms is not restricted to Fourier structure.
For example, \cite{Potts:15} proposes a multidimensional Chebyshev transform with reduced runtime.
In \cite{Choi:18}, a more general approach has been established that yields fast sparse transforms for arbitrary bounded orthonormal product basis with a runtime scaling polynomially in $s$ up to logarithmic factors.
These results have been generalized in \cite{Choi:19b} to signals with only an approximately $s$-sparse representation while maintaining a computational complexity that is sublinear in the dimension $n$.
While covering a significantly larger class of transforms than just the Fourier transform, all these approaches remain restricted to a specific structure or class of structures of the transformation matrix.

At the same time, data-driven sparsifying transforms, which have been demonstrated to outperform predefined structured representation systems in a variety of contexts~\cite{EladA:06,BrucksteinDE:09,TosicF:11,MairalBP:12,RavishankarB:12,NamDEG:13}, typically do not have structural properties that allow for the application of any of the above fast transform methods. This issue was addressed in \cite{RusuGH:16,RusuT:17} by imposing structure amenable to fast transforms on the learned representation system $A$ so as to facilitate the computation of $Ax$. At the same time, this imposed structure significantly limits the space of admissible transforms, and the question remains whether a fast transform can also be constructed for learned representation systems beyond these restrictions.

In this paper, we consider cases where the desired product $Ax$ is approximately sparse for a matrix $A$ that does not follow any preset structural constraints, e.g., because it
is learned from data.
In particular, we assume $A$ is arbitrary.
Note that to compute the mapping $(A,x)\mapsto Ax$ for an arbitrary matrix $A$ and vector $x$, one must first read the input $(A,x)\in\mathbb{R}^{m\times n}\times\mathbb{R}^n$.
Since this already requires $\Theta(mn)$ operations, naive matrix--vector multiplication is optimally efficient when computing an individual matrix--vector product.
To obtain a speedup, we instead apply the same transform $x\mapsto Ax$ to a stream of vectors $x\in\mathbb{R}^n$, which models the setting of many applications.
Our approach will require some upfront preprocessing given $A$ in exchange for a much faster per-vector computation.
While little work has been done in this vein, there has been quite a bit of work on related problems, which we discuss below.

The first result following this strategy~\cite{Williams:07} concerns a matrix--vector multiplication algorithm over finite semirings (for general matrices and vectors, hence not assuming any kind of sparsity), which performs $O(n^{2+\epsilon})$ operations of preprocessing on an $n\times n$ matrix before multiplying with an arbitrary vector in $O(n^2/(\epsilon \log n)^2)$ operations. 
The first and (to our knowledge) only algorithm that achieves a comparable result for real $m\times n$ matrices is the \textit{mailman algorithm} introduced in~\cite{LibertyZ:09}. 
Provided the matrix contains only a constant number of distinct values, the algorithm takes $O(mn)$ operations of preprocessing and then takes $O(mn/\log(m+n))$ operations to multiply the preprocessed matrix with an arbitrary vector.

As an alternative, one might batch the stream of vectors into matrices and then perform matrix multiplication.
(Granted, such a batched computation is unacceptable for many applications.)
Research on matrix multiplication was initiated by the seminal work of Strassen~\cite{Strassen:69}, which multiplies two arbitrary $n\times n$ matrices in only $O(n^{2.808})$ operations (i.e., much faster than the naive $O(n^3)$ algorithm).
Later algorithms~\cite{CoppersmithW:90,DavieS:13,Williams:12,LeGall:14} improved this computational complexity to its currently best known scaling of $O(n^{2.373})$.
After dividing by the batch size, this gives a per-vector cost of $O(n^{1.373})$ operations.
However, we note that such algorithms are infeasible in practice.

A more feasible approach to matrix multiplication was proposed by Drineas, Kanan, and Mahoney~\cite{DrineasKH:06}.
They compute a random approximation of the desired product by multiplying two smaller matrices: one consisting of $k$ randomly selected columns of the first matrix $A$, and the other consisting of corresponding rows from the second matrix $B$.
With high probability, the Frobenius norm of the estimate error is $O(\|A\|_F\|B\|_F/\sqrt{k})$. 
Unfortunately, if $B$ represents a batch of column vectors, then this guarantee offers little control of the error in each vector. 
On the other hand, if $B$ represents a single column vector $b$, then  $\|Ab\|_2$ is typically much smaller than $\|A\|_F\|b\|_2$, so the resulting relative error is quite large even for relatively large values of $k$.

\subsection{Our approach}


Given $A\in\mathbb{R}^{m\times n}$, let $\Sigma(A,s,\delta)$ denote the set of all unit vectors $x\in\mathbb{R}^n$ for which
\[
\inf\Big\{\|Ax-v\|_\infty:v\in\mathbb{R}^m,|\operatorname{supp}(v)|\leq s\Big\}
\leq\delta,
\]
i.e., $Ax$ is $\delta$-close to being $s$-sparse.
Given $\epsilon>0$, let $h_\epsilon\colon\mathbb{R}\to\mathbb{R}$ denote the $\epsilon$-hard thresholding function defined by $h_\epsilon(t):=t\cdot\mathbf{1}\{|t|\geq\epsilon\}$.
By abuse of notation, we apply $h_\epsilon$ to the entries of a vector $v$ by writing $h_\epsilon(v)$.
We seek to solve the following:

\begin{problem}
Given an arbitrary $A\in\mathbb{R}^{m\times n}$, $s\in\mathbb{N}$, and $\epsilon>\delta\geq0$, preprocess $A$ so that one may quickly compute $h_\epsilon(Ax)$ for any $x\in\Sigma(A,s,\delta)$.
\end{problem}



Our approach uses a specially designed random vector $z\in\mathbb{R}^n$ such that $\mathbb{E}zz^\top=I$:
\begin{equation}
\label{eq.matrix mult by average}
Ax
=A(\mathbb{E}zz^\top)x
=\mathbb{E}Azz^\top x.
\end{equation}
Denote the random vector $y:=Azz^\top x\in\mathbb{R}^m$.
The fact that $\mathbb{E}y=Ax$ suggests a Monte Carlo approach to estimate $Ax$.
That is, we will approximate the true average $Ax$ with an estimator determined by $N$ independent samples.
To obtain a fast algorithm in this vein, we will select a distribution for $z$ and an estimator $\hat\mu$ for $\mathbb{E}y$ that together satisfy three properties:
\begin{itemize}
\item[(i)]
the distribution of $z$ is discrete with small support,
\item[(ii)]
$\hat\mu$ can be computed in linear time from independent realizations $\{y_j\}_{j\in[N]}$ of $y$, and
\item[(iii)]
for each $x\in\Sigma(A,s,\delta)$, 
$\|\hat\mu-Ax\|_\infty
<\frac{\epsilon-\delta}{2}$ 
with high probability, even for small $N$.
\end{itemize}
Indeed, if (i)--(iii) hold, then one may compute $h_\epsilon(Ax)$ using the following (fast) algorithm:

Let $\{s_\ell:\ell\in[L]\}\subseteq\mathbb{R}^n$ denote the support of the distribution of $z$.
Given $A\in\mathbb{R}^{m\times n}$, we run the preprocessing step of computing $\{As_\ell\}_{\ell\in[L]}$ in $O(Lmn)$ operations.
Granted, this is more expensive than computing $Ax$, but we only need to compute $\{As_\ell\}_{\ell\in[L]}$ once, while we expect to compute $Ax$ for a stream of $x$'s.
Next, given $x\in\Sigma(A,s,\delta)$, we draw independent realizations $\{z_j\}_{j\in[N]}$ of $z$.
Since we already computed $\{As_\ell\}_{\ell\in[L]}$, we may then compute the corresponding realizations $\{y_j\}_{j\in[N]}$ of $y$ in $O(N (m+n))$ operations.
Next, by (ii), we may compute $\hat\mu$ from $\{y_j\}_{j\in[N]}$ in $O(mN)$ operations.
Finally, let $S\subseteq[m]$ denote the indices of the $s$ entries of $\hat\mu$ of largest magnitude.
Then by (iii), it holds with high probability that $|\hat\mu_i|>\frac{\epsilon+\delta}{2}$ for every $i\in\operatorname{supp}(h_\epsilon(Ax))$ while $|\hat\mu_i|<\frac{\epsilon+\delta}{2}$ for every $i\in[m]$ such that $|(Ax)_i|\leq\delta$.
Since $x\in\Sigma(A,s,\delta)$ by assumption, it follows that $S\supseteq\operatorname{supp}(h_\epsilon(Ax))$, and so $A_Sx$ determines $h_\epsilon(Ax)$, which we compute in $O(sn)$ additional operations.
(Of course, $A_Sx$ might determine other entries in the support of $Ax$, and we would not discard this information in practice.)

To obtain (i)--(iii), we take $z$ to be uniformly distributed over an appropriately scaled $n$-dimensional projection of a projective $2$-design, and for $\hat\mu$, we partition $[N]$ into batches and compute the entrywise median of means of $\{y_j\}_{j\in[N]}$ over these batches.
The projective $2$-design allows us to control the variance of each entry of the random vector $y$; see Lemma~\ref{lem.variance bound}.
Next, the median-of-means estimator improves over the sample mean by being less sensitive to outliers in the small random sample $\{y_j\}_{j\in[N]}$.
Thanks to this behavior, we can get away with drawing only $N=O((\epsilon-\delta)^{-2}\|A\|_{2\to\infty}^2\log(m/\eta))$ samples, where the induced norm $\|A\|_{2\to\infty}$ equals the largest $\ell_2$-norm of the rows of $A$, and $\eta$ denotes the failure probability of the randomized algorithm; see Theorem~\ref{thm.median of means}.
As a bonus, the Kerdock set--based projective $2$-design we use enjoys a fast matrix--vector multiplication algorithm, yielding a preprocessing step of only $O(mn^2\log n+n^2\log^2n)$ operations despite having $L=\Theta(n^2)$; see Lemma~\ref{lem.kerdock preprocessing}.
See Algorithm~\ref{alg.fst} for a summary of our approach.

\begin{algorithm}[t]
\SetAlgoLined
\KwData{Parameters $\epsilon>\delta\geq0$, $s,J,K\in\mathbb{N}$, matrix $A\in\mathbb{R}^{m\times n}$, stream of $x\in\Sigma(A,s,\delta)$}
\KwResult{Entrywise $\epsilon$-hard threshold of matrix--vector products $h_\epsilon(Ax)$}

\medskip

\textit{Preprocessing step}\\
Let $d$ denote the smallest power of $2$ that is at least $n$, and put $L:=d(d/2+1)$\\
Let $\{u_\ell\}_{\ell\in[L]}$ denote a projective $2$-design for $\mathbb{R}^d$ arising from a Kerdock set\\
Put $s_\ell:=\sqrt{d}\Pi u_\ell$, where $\Pi\in\mathbb{R}^{n\times d}$ denotes projection onto the first $n$ coordinates\\
Use Lemma~\ref{lem.kerdock preprocessing} to compute $\{As_\ell\}_{\ell\in[L]}$

\medskip

\textit{Streaming step}\\
Draw $N:=JK$ indices $\{\ell_j\}_{j\in[N]}$ uniformly from $[L]$\\
Compute $y_j:=(As_{\ell_j})(s_{\ell_j}^\top x)$ for each $j\in[N]$\\
Compute the entrywise median of means $\hat\mu$ of $\{y_j\}_{j\in[N]}$ over $K$ batches of size $J$\\
Let $S\subseteq[m]$ denote the indices of the $s$ entries of $\hat\mu$ with largest magnitude\\
Compute $A_Sx$ and output the indices and values of entries with magnitude at least $\epsilon$

\caption{Fast sparsifying transform for an arbitrary linear map
\label{alg.fst}}
\end{algorithm}

To quickly evaluate the utility of this algorithm, consider the following model: 
$A$ is an arbitrary $n\times n$ orthogonal matrix, and $x$ is a random vector such that the entries of $Ax$ are drawn independently from the following mixture:
\[
e_i^\top Ax
\sim\left\{\begin{array}{cl}
\mathcal{N}(0,1)&\text{with probability }p\\
0&\text{with probability }1-p.
\end{array}\right.
\]
Then the expected size of the support of $Ax$ is $pn$.
In this model, our algorithm provides a speedup over naive matrix--vector multiplication in the regime
\[
1\prec pn\prec \frac{n}{\log n}.
\]
To see this, first put $s:=10pn$ (say).
Then the multiplicative Chernoff bound implies that $Ax$ is $s$-sparse with high probability, and so we take $\delta=0$.
Before selecting $\epsilon>0$, we normalize our vector so that $\hat{x}:=x/\|x\|_2\in\Sigma(A,s,\delta)$ with high probability.
Next, standard tail bounds imply that $\|x\|_2^2\in\Theta(pn)$ with high probability.
This suggests the scaling $\epsilon:=\alpha/\sqrt{pn}$ with $\alpha\in(0,1)$.
Considering a fraction $\Theta(\alpha)$ of the support of $Ax$ has magnitude $\Theta(\alpha)$, the hard threshold $h_\epsilon(A\hat{x})$ serves as a decent estimate for the product $A\hat{x}$:
\[
\|h_\epsilon(A\hat{x})-A\hat{x}\|_2^2
=\Theta(\alpha^3).
\]
Furthermore, we obtain this quality of estimate with relatively little computation:
Since $m=n$, we have $O(n^3\log n)$ operations of preprocessing, and then for each $x$, we compute $h_\epsilon(A\hat{x})$ in $O(\alpha^{-2}sn\log n)$ operations since $\|A\|_{2\to\infty}=1$.
By comparison, if an oracle were to reveal the support of $h_\epsilon(A\hat{x})$, then naive matrix--vector multiplication with the appropriate submatrix of $A$ would cost $O(sn)$ operations.

\subsection{Outline}

In the next section, we review the necessary theory of projective designs, and we show how they can be used in conjunction with a median-of-means estimator to obtain a high-quality random estimate of a sparse matrix--vector product.
Next, Section~\ref{sec.kerdock} provides the details of a specific choice of projective design, namely, one that arises from a Kerdock set described by Calderbank, Cameron, Kantor, and Seidel~\cite{CalderbankCKS:97}.
This particular choice of projective design allows us to leverage the fast Walsh--Hadamard transform to substantially speed up the preprocessing step of our algorithm.
We conclude in Section~\ref{sec.numerics} with some numerical results that demonstrate the plausibility of a real-world implementation of our algorithm.

\section{Projective designs and the median of means}
\label{sec.designs median}

Let $\sigma$ denote the uniform probability measure on the unit sphere $S^{d-1}$ in $\mathbb{R}^d$, let $\operatorname{Hom}_{j}(\mathbb{R}^d)$ denote the set of homogeneous polynomials of total degree $j$ in $d$ real variables, and put
\[
c_{d,k}
:=\frac{1\cdot3\cdot5\cdots(2k-1)}{d(d+2)\cdots(d+2(k-1))}.
\]
A \textbf{projective $t$-design} for $\mathbb{R}^d$ is defined to be any $\{u_\ell\}_{\ell\in[L]}$ in $S^{d-1}$ that satisfies the following equivalent properties:


\begin{proposition}
\label{prop.proj t design}
Given $\{u_\ell\}_{\ell\in[L]}$ in $S^{d-1}$ and $t\in\mathbb{N}$, the following are equivalent:
\begin{itemize}
\item[(a)]
$\frac{1}{L}\sum_{\ell\in[L]}p(u_\ell)=\int_{S^{d-1}}p(u)d\sigma(u)$ for every $p\in\operatorname{Hom}_{2k}(\mathbb{R}^d)$ and every $k\in\{0,\ldots,t\}$.
\item[(b)]
$\frac{1}{L}\sum_{\ell\in[L]}\langle x,u_\ell\rangle^{2k}=c_{d,k}\|x\|^{2k}$ for every $x\in\mathbb{R}^d$ and every $k\in\{1,\ldots,t\}$.
\item[(c)]
$\frac{1}{L^2}\sum_{\ell,\ell'\in[L]}\langle u_\ell,u_{\ell'}\rangle^{2k}=c_{d,k}$ for every $k\in\{1,\ldots,t\}$.
\end{itemize}
\end{proposition}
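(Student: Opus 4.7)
The plan is to establish the equivalences via (a) $\Leftrightarrow$ (b), (b) $\Rightarrow$ (c), and (c) $\Rightarrow$ (b). The first three implications reduce to routine manipulations with polynomials and substitution, while (c) $\Rightarrow$ (b) is the main content and rests on a Gram-type identity in the space of symmetric tensors.

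For (a) $\Rightarrow$ (b), I fix $x \in \mathbb{R}^d$ and $k \in \{1,\ldots,t\}$ and observe that $u \mapsto \langle x, u\rangle^{2k}$ lies in $\operatorname{Hom}_{2k}(\mathbb{R}^d)$, so (a) reduces the claim to evaluating $\int_{S^{d-1}} \langle x, u\rangle^{2k}\, d\sigma(u)$. By rotational invariance of $\sigma$, this integral equals $\|x\|^{2k}$ times the one-dimensional moment $\int_{S^{d-1}} u_1^{2k}\, d\sigma(u)$, which a standard beta-function (or Dirichlet) computation identifies with exactly $c_{d,k}$. For the converse (b) $\Rightarrow$ (a), I use the multinomial expansion
\[
\langle x, u\rangle^{2k} = \sum_{|\alpha| = 2k} \binom{2k}{\alpha} x^\alpha u^\alpha
\]
and the linear independence of monomials in $x$ to conclude that $\{u \mapsto \langle x, u\rangle^{2k} : x \in \mathbb{R}^d\}$ spans $\operatorname{Hom}_{2k}(\mathbb{R}^d)$; since both sides of (a) are linear in $p$, verifying (a) on this spanning set (which is exactly (b) combined with the integral evaluation above) is enough. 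The case $k=0$ is trivial as both sides evaluate to the constant. The implication (b) $\Rightarrow$ (c) is immediate by instantiating (b) with $x = u_{\ell'}$, using $\|u_{\ell'}\| = 1$, and averaging over $\ell' \in [L]$.

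The main step is (c) $\Rightarrow$ (b). I work in the inner-product space of symmetric $2k$-tensors on $\mathbb{R}^d$, equipped with the inner product for which $\langle u^{\otimes 2k}, v^{\otimes 2k}\rangle = \langle u, v\rangle^{2k}$. Define
\[
T_k := \frac{1}{L} \sum_{\ell \in [L]} u_\ell^{\otimes 2k}, \qquad M_k := \int_{S^{d-1}} u^{\otimes 2k}\, d\sigma(u).
\]
Then (c) is the statement $\|T_k\|^2 = c_{d,k}$. A brief calculation using the spherical moment formula $\int \langle x, u\rangle^{2k}\, d\sigma(u) = c_{d,k}\|x\|^{2k}$ shows that $\langle T_k, M_k\rangle = c_{d,k}$ and $\|M_k\|^2 = c_{d,k}$ as well, so
\[
\|T_k - M_k\|^2 = \|T_k\|^2 - 2\langle T_k, M_k\rangle + \|M_k\|^2 = 0.
\]
Hence $T_k = M_k$, and pairing both sides with $x^{\otimes 2k}$ recovers (b).

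The main obstacle is setting up the tensor framework cleanly, but once the identification $\langle u^{\otimes 2k}, v^{\otimes 2k}\rangle = \langle u, v\rangle^{2k}$ is in place, (c) $\Rightarrow$ (b) collapses to a one-line Pythagorean argument. The only other non-trivial ingredient used throughout is the closed form for the spherical moment of a linear form, which is standard.
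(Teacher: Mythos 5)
The paper does not actually prove this proposition; it notes that condition~(a) says precisely that $\{u_\ell\}_{\ell\in[L]}\cup\{-u_\ell\}_{\ell\in[L]}$ is a spherical $2t$-design and defers to Section~6.4 of Waldron's book and the references therein. Your argument, by contrast, is a self-contained direct proof, and it is correct. The chain (a)\,$\Rightarrow$\,(b) uses the standard evaluation of the spherical moment of a power of a linear form; the converse uses the (true, standard) fact that powers $\langle x,\cdot\rangle^{2k}$ span $\operatorname{Hom}_{2k}(\mathbb{R}^d)$, which your multinomial/linear-independence sketch correctly establishes; (b)\,$\Rightarrow$\,(c) is the immediate substitution-and-average; and (c)\,$\Rightarrow$\,(b) is the Pythagorean (variance) argument in the space of symmetric $2k$-tensors with the Hilbert--Schmidt-type inner product $\langle u^{\otimes 2k},v^{\otimes 2k}\rangle=\langle u,v\rangle^{2k}$, where the three quantities $\|T_k\|^2$, $\langle T_k,M_k\rangle$, and $\|M_k\|^2$ all equal $c_{d,k}$, forcing $T_k=M_k$. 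The tensor-space formulation is a well-known device in design theory (it is essentially the content behind the positive-definiteness of the Gegenbauer kernel that one finds in references such as Waldron), and it makes (c)\,$\Rightarrow$\,(b) genuinely short; the cost is having to set up the symmetric-tensor inner product and to verify the spherical moment identity $\int_{S^{d-1}}\langle x,u\rangle^{2k}\,d\sigma(u)=c_{d,k}\|x\|^{2k}$, which you correctly treat as a standard Beta/Dirichlet computation. If one wanted to include this proof in the paper, the only points worth spelling out more fully are the linear-functional argument behind the spanning claim in (b)\,$\Rightarrow$\,(a) and the computation confirming that the one-dimensional moment $\int u_1^{2k}\,d\sigma(u)$ equals $c_{d,k}$.
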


The proof of Proposition~\ref{prop.proj t design} is contained in Section~6.4 of~\cite{Waldron:18} and references therein, given the observation that $\{u_\ell\}_{\ell\in[L]}$ satisfies Proposition~\ref{prop.proj t design}(a) precisely when $\{u_\ell\}_{\ell\in[L]}\cup\{-u_\ell\}_{\ell\in[L]}$ forms a so-called \textit{spherical $2t$-design}.
We will see that the cubature rule in Proposition~\ref{prop.proj t design}(a) is what makes projective $t$-designs useful, while Proposition~\ref{prop.proj t design}(c) makes them easy to identify.
We note that an analog of Proposition~\ref{prop.proj t design}(c) is used to define projective $t$-designs in a variety of settings, such as complex projective space and the Cayley plane~\cite{Munemasa:07}.

Our application of projective $2$-designs encourages us to take the size $L$ to be as small as possible.
To this end, there is a general lower bound~\cite{BannaiH:85} of $L\geq\binom{d+1}{2}$, but to date, equality is only known to be achieved for $d\in\{2,3,7,23\}$; see~\cite{Gillespie:18} and references therein.
Despite this scarcity, there are \textit{infinite} families of projective $2$-designs that take $L$ to be slightly larger, specifically, $L=d(d/2+1)$ whenever $d$ is a power of $4$; see~\cite{CameronS:91,CalderbankCKS:97}.
For these constructions, $\{u_\ell\}_{\ell\in[L]}$ takes the form of a union of orthonormal bases.
Orthonormal bases $\{x_i\}_{i\in[d]}$ and $\{y_i\}_{i\in[d]}$ are said to be \textbf{unbiased} if $|\langle x_i,y_j\rangle|^2=1/d$ for every $i,j\in[d]$.

\begin{proposition}
\label{prop.mubs}
Suppose $\{u_{b,i}\}_{i\in[d]}$ in $\mathbb{R}^d$ is orthonormal for every $b\in[d/2+1]$, and suppose further that $\{u_{b,i}\}_{i\in[d]}$ and $\{u_{b',i}\}_{i\in[d]}$ are unbiased for every $b,b'\in[d/2+1]$ with $b\neq b'$.
Then $\{u_{b,i}\}_{b\in[d/2+1],i\in[d]}$ forms a projective $2$-design for $\mathbb{R}^d$.
\end{proposition}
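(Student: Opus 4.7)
The plan is to verify condition (c) of Proposition~\ref{prop.proj t design} for $t=2$, i.e., to check for $k\in\{1,2\}$ that
\[
\sum_{b,b'\in[d/2+1]}\sum_{i,i'\in[d]}\langle u_{b,i},u_{b',i'}\rangle^{2k}
= L^2\cdot c_{d,k},
\]
where $L=d(d/2+1)$. Since the vectors lie in $\mathbb{R}^d$, the inner products are real, so we may evaluate $\langle\cdot,\cdot\rangle^{2k}=|\langle\cdot,\cdot\rangle|^{2k}$ directly from the hypotheses.

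I would split the double sum into its diagonal ($b=b'$) and off-diagonal ($b\neq b'$) contributions. For $b=b'$, orthonormality of $\{u_{b,i}\}_{i\in[d]}$ gives $\langle u_{b,i},u_{b,i'}\rangle^{2k}=\delta_{i,i'}$, so each $b$ contributes $d$; summing over $b$ yields $(d/2+1)d$. For $b\neq b'$, the unbiasedness hypothesis forces $\langle u_{b,i},u_{b',i'}\rangle^{2k}=d^{-k}$ for every $(i,i')$, and there are $(d/2+1)(d/2)$ ordered pairs $(b,b')$ and $d^2$ pairs $(i,i')$, giving a contribution of $(d/2+1)(d/2)\cdot d^{2-k}$. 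Hence the total sum equals
\[
(d/2+1)d+(d/2+1)(d/2)\,d^{2-k}.
\]

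The remaining work is routine arithmetic, using $d/2+1=(d+2)/2$. For $k=1$, the total becomes $(d/2+1)d\bigl(1+d/2\bigr)=d(d/2+1)^2=L^2/d$, which matches $L^2c_{d,1}$ since $c_{d,1}=1/d$. For $k=2$, the total becomes $(d/2+1)(d+d/2)=\tfrac{3d(d+2)}{4}$, while $L^2c_{d,2}=d^2(d/2+1)^2\cdot\tfrac{3}{d(d+2)}=\tfrac{3d(d+2)}{4}$; again the identity holds.

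There is no genuine obstacle: the only care needed is to invoke $\langle\cdot,\cdot\rangle^{2k}=|\langle\cdot,\cdot\rangle|^{2k}$ (valid because we are working in $\mathbb{R}^d$, not $\mathbb{C}^d$) and to keep track of the factor $(d+2)/2$ when comparing the sum against $L^2 c_{d,k}$. With both $k=1$ and $k=2$ verified, Proposition~\ref{prop.proj t design}(c) yields that $\{u_{b,i}\}_{b\in[d/2+1],i\in[d]}$ is a projective $2$-design, completing the proof.
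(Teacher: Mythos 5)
Your proof is correct and follows exactly the route the paper indicates: verifying Proposition~\ref{prop.proj t design}(c) directly from orthonormality (diagonal blocks) and the definition of unbiasedness (off-diagonal blocks), then checking the arithmetic against $c_{d,1}=1/d$ and $c_{d,2}=3/(d(d+2))$. The paper states this in one line and leaves the computation to the reader; you have supplied precisely those details, and they check out.
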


The proof of Propostion~\ref{prop.mubs} follows from Propostion~\ref{prop.proj t design}(c) and the definition of unbiased.
In Section~\ref{sec.kerdock}, we will provide an explicit construction of this form.
In the meantime, we show how projective $2$-designs are useful in our application.
The following result defines the random vector $z$ in terms of a projective $2$-design, and then uses this structure to control the variance of each coordinate of the random vector $y:=Azz^\top x$.

\begin{lemma}
\label{lem.variance bound}
Given $A\in\mathbb{R}^{m\times n}$, fix a projective $2$-design $\{u_\ell\}_{\ell\in[L]}$ for $\mathbb{R}^{d}$ with $d\geq n$, let $\Pi\colon\mathbb{R}^{d}\to\mathbb{R}^n$ denote projection onto the first $n$ coordinates, and let $z$ denote a random vector with uniform distribution over $\{\sqrt{d}\Pi u_\ell:\ell\in[L]\}$.
Given a unit vector $x\in\mathbb{R}^n$, define the random vector $y:=Azz^\top x\in\mathbb{R}^m$.
Then for each $i\in[m]$, it holds that
\[
\mathbb{E}[e_i^\top y]
=e_i^\top Ax,
\qquad
\operatorname{Var}(e_i^\top y)
\leq 2\|A\|_{2\to\infty}^2.
\]
\end{lemma}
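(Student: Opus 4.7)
The plan is to compute both the mean and the second moment of $e_i^\top y$ directly from the cubature rule in Proposition~\ref{prop.proj t design}, and then bound the variance by the second moment.

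First, I would handle the mean. Writing $z = \sqrt{d}\,\Pi u_\ell$ with $\ell$ uniform on $[L]$, the identity in Proposition~\ref{prop.proj t design}(b) with $k=1$ (which gives $c_{d,1}=1/d$) yields $\mathbb{E}[u_\ell u_\ell^\top] = \frac{1}{d} I_d$, so
\[
\mathbb{E}[zz^\top] \;=\; d\,\Pi\,\mathbb{E}[u_\ell u_\ell^\top]\,\Pi^\top \;=\; \Pi\Pi^\top \;=\; I_n.
\]
Hence $\mathbb{E}[y] = A\,\mathbb{E}[zz^\top]\,x = Ax$, giving the first claim.

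For the variance, writing $a_i := A^\top e_i$, observe that $e_i^\top y = (a_i^\top z)(z^\top x)$, so
\[
\operatorname{Var}(e_i^\top y) \;\leq\; \mathbb{E}\bigl[(a_i^\top z)^2 (z^\top x)^2\bigr] \;=\; d^2\,\mathbb{E}_\ell\bigl[\langle \tilde a_i, u_\ell\rangle^2 \langle \tilde x, u_\ell\rangle^2\bigr],
\]
where $\tilde a_i := \Pi^\top a_i$ and $\tilde x := \Pi^\top x$ are zero-padded to $\mathbb{R}^d$, preserving norms and the inner product $\langle \tilde a_i,\tilde x\rangle = \langle a_i,x\rangle$. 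The integrand is a homogeneous polynomial of degree $4$ in $u_\ell$, so the $2$-design cubature rule (Proposition~\ref{prop.proj t design}(a) with $k=2$) lets me replace the discrete average by the spherical integral over $S^{d-1}$.

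The main technical step is then the standard fourth-moment integral on the sphere, which by polarization of $\int \langle v,u\rangle^4 d\sigma(u) = c_{d,2}\|v\|^4 = 3\|v\|^4/(d(d+2))$ gives
\[
\int_{S^{d-1}} \langle \tilde a_i, u\rangle^2 \langle \tilde x, u\rangle^2\, d\sigma(u) \;=\; \frac{\|a_i\|^2 + 2\langle a_i, x\rangle^2}{d(d+2)},
\]
using $\|\tilde x\| = \|x\| = 1$. Substituting back yields
\[
\mathbb{E}\bigl[(e_i^\top y)^2\bigr] \;=\; \frac{d\bigl(\|a_i\|^2 + 2\langle a_i, x\rangle^2\bigr)}{d+2}.
\]
Finally, I would subtract $(\mathbb{E}[e_i^\top y])^2 = \langle a_i, x\rangle^2$ and apply Cauchy--Schwarz, $\langle a_i, x\rangle^2 \leq \|a_i\|^2$, to obtain
\[
\operatorname{Var}(e_i^\top y) \;=\; \frac{d\|a_i\|^2 + (d-2)\langle a_i, x\rangle^2}{d+2} \;\leq\; \frac{2d-2}{d+2}\,\|a_i\|^2 \;\leq\; 2\,\|A\|_{2\to\infty}^2,
\]
since $\|a_i\|_2 \leq \|A\|_{2\to\infty}$ by definition.

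There is no real obstacle; the only thing one must get right is the polarization formula for $\int \langle a,u\rangle^2\langle b,u\rangle^2 d\sigma$ and the bookkeeping of the zero-padding through $\Pi$. The factor of $2$ in the stated bound comes precisely from the ratio $(2d-2)/(d+2) < 2$ after Cauchy--Schwarz; I note in passing that a tighter constant of $2d/(d+2)$ is available but is not needed here.
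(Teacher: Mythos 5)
Your proof is correct and follows essentially the same route as the paper: exploit the projective $2$-design cubature rule to evaluate the first and second moments, then subtract and bound by Cauchy--Schwarz. The only cosmetic differences are that the paper reduces to a two-coordinate integral via an explicit rotation $Q$ (with $Q\tilde{x}=e_1$, $Q\tilde{a}_i=\alpha_1 e_1+\alpha_2 e_2$) and plugs in monomial moments $\mathbb{E}u_1^4$, $\mathbb{E}u_1^2u_2^2$ from Folland, whereas you invoke the equivalent polarized fourth-moment formula $\int\langle a,u\rangle^2\langle b,u\rangle^2\,d\sigma=(\|a\|^2\|b\|^2+2\langle a,b\rangle^2)/(d(d+2))$ directly and handle the mean via Proposition~\ref{prop.proj t design}(b) rather than rotation---both yield the identical exact variance $\frac{d\|a_i\|^2+(d-2)\langle a_i,x\rangle^2}{d+2}$ and the bound $2\|a_i\|^2$. (One small presentation nit: your first display writes $\operatorname{Var}(e_i^\top y)\leq\mathbb{E}[(a_i^\top z)^2(z^\top x)^2]$ as if you would crudely bound the variance by the second moment, but you then correctly subtract $(\mathbb{E}[e_i^\top y])^2$; had you stopped at that crude bound you would only get a constant of $3$ rather than $2$, so the subtraction is genuinely needed.)
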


\begin{proof}
Let $\Pi^*$ denote the adjoint of $\Pi$, namely, the map that embeds $\mathbb{R}^n$ into the first $n$ coordinates of $\mathbb{R}^{d}$.
Fix $i\in[m]$, let $a_i^\top$ denote the $i$th row of $A$, and put $\tilde{a}_i:=\Pi^*a_i$ and $\tilde{x}:=\Pi^*x$.
Then
\[
e_i^\top y
=e_i^\top Azz^\top x
=a_i^\top (\sqrt{d}\Pi u_{\ell(z)})(\sqrt{d}\Pi u_{\ell(z)})^\top x
=d\tilde{a}_i^\top u_{\ell(z)}u_{\ell(z)}^\top \tilde{x}.
\]
To compute $\mathbb{E}[e_i^\top y]$ and $\mathbb{E}[(e_i^\top y)^2]$, we will consider a random vector $u$ that is uniformly distributed on the unit sphere $S^{d-1}$, as well as some rotation $Q\in\operatorname{O}(d)$ such that
\[
Q\tilde{x}=e_1,
\qquad
Q\tilde{a}_i=\alpha_1e_1+\alpha_2e_2,
\qquad
\alpha_1,\alpha_2\in\mathbb{R}.
\]
Since $v\mapsto d\tilde{a}_i^\top vv^\top \tilde{x}$ resides in $\operatorname{Hom}_2(\mathbb{R}^d)$, we may apply Proposition~\ref{prop.proj t design}(a) to get
\[
\mathbb{E}[e_i^\top y]
=\frac{1}{L}\sum_{\ell\in[L]}d\tilde{a}_i^\top u_{\ell}u_{\ell}^\top \tilde{x}
=\mathbb{E}[d\tilde{a}_i^\top uu^\top \tilde{x}]
=\mathbb{E}[d\tilde{a}_i^\top Q^\top Quu^\top Q^\top Q \tilde{x}].
\]
A change of variables $Qu\mapsto u$ then gives
\[
\mathbb{E}[e_i^\top y]
=\mathbb{E}[d(Q\tilde{a}_i)^\top uu^\top (Q \tilde{x})]
=\mathbb{E}[d(\alpha_1e_1+\alpha_2e_2)^\top uu^\top e_1]
=d\alpha_1\mathbb{E}u_1^2.
\]
Considering $1=\mathbb{E}[\|u\|^2]=d\mathbb{E}u_1^2$ by symmetry and linearity of expectation, it follows that
\[
\mathbb{E}[e_i^\top y]
=\alpha_1
=(\alpha_1e_1+\alpha_2e_2)^\top e_1
=\tilde{a}_i^\top\tilde{x}
=a_i^\top x
=e_i^\top Ax.
\]
Indeed, this behavior was the original motivation~\eqref{eq.matrix mult by average} for our approach.
We will apply the same technique to compute $\mathbb{E}[(e_i^\top y)^2]$.
Since $v\mapsto (d\tilde{a}_i^\top vv^\top \tilde{x})^2$ resides in $\operatorname{Hom}_4(\mathbb{R}^d)$, we may apply Proposition~\ref{prop.proj t design}(a) to get
\[
\mathbb{E}[(e_i^\top y)^2]
=\frac{1}{L}\sum_{\ell\in[L]}(d\tilde{a}_i^\top u_\ell u_\ell^\top \tilde{x})^2
=\mathbb{E}[(d\tilde{a}_i^\top uu^\top \tilde{x})^2]
=\mathbb{E}[(d\tilde{a}_i^\top Q^\top Quu^\top Q^\top Q\tilde{x})^2].
\]
A change of variables $Qu\mapsto u$ then gives
\[
\mathbb{E}[(e_i^\top y)^2]
=\mathbb{E}[(d(\alpha_1 e_1+\alpha_2 e_2)^\top uu^\top e_1)^2]
=d^2(\alpha_1^2\mathbb{E}u_1^4+\alpha_2^2\mathbb{E}u_1^2u_2^2).
\]
Next, the theorem in~\cite{Folland:01} implies $\mathbb{E}u_1^4=\frac{3}{d(d+2)}$ and $\mathbb{E}u_1^2u_2^2=\frac{1}{d(d+2)}$, thereby implying
\[
\mathbb{E}[(e_i^\top y)^2]
=\alpha_1^2\cdot\frac{3d}{d+2}+\alpha_2^2\cdot\frac{d}{d+2}.
\]
Finally, we recall $\mathbb{E}[e_i^\top y]=\alpha_1$ to compute the desired variance:
\[
\operatorname{Var}(e_i^\top y)
=\mathbb{E}[(e_i^\top y)^2]-(\mathbb{E}[e_i^\top y])^2
=2\alpha_1^2\cdot\frac{d-1}{d+2}+\alpha_2^2\cdot\frac{d}{d+2}
\leq 2\alpha_1^2+\alpha_2^2
\leq 2\|a_i\|^2.
\]
The result then follows from the fact that $\|A\|_{2\to\infty}^2=\max_{i\in[m]}\|a_i\|^2$.
\end{proof}

Now that we have control of the variance, we can obtain strong deviation bounds on our median-of-means estimator $\hat\mu$ of $\mathbb{E}y=Ax$.

\begin{theorem}
\label{thm.median of means}
Given $A\in\mathbb{R}^{m\times n}$, fix a projective $2$-design $\{u_\ell\}_{\ell\in[L]}$ for $\mathbb{R}^{d}$ with $d\geq n$, let $\Pi\colon\mathbb{R}^{d}\to\mathbb{R}^n$ denote projection onto the first $n$ coordinates, and let $z$ denote a random vector with uniform distribution over $\{\sqrt{d}\Pi u_\ell:\ell\in[L]\}$.
Given a unit vector $x\in\mathbb{R}^n$, define the random vector $y:=Azz^\top x\in\mathbb{R}^m$, select
\[
J\geq\frac{4e^2\|A\|_{2\to\infty}^2}{\gamma^2},
\qquad
K\geq2\log\Big(\frac{m}{\eta}\Big),
\]
draw independent copies $\{y_{jk}\}_{j\in[J],k\in[K]}$ of $y$, and compute the entrywise median of means:
\[
\hat\mu:=\operatorname{median}\{\overline{y}_k\}_{k\in[K]},
\qquad
\overline{y}_k:=\frac{1}{J}\sum_{j\in[J]}y_{jk}.
\]
Then $\|\hat\mu-Ax\|_\infty<\gamma$ with probability at least $1-\eta$.
\end{theorem}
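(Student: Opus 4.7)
The plan is to prove the theorem coordinatewise, reducing to a one-dimensional median-of-means bound for each entry of $y$, and then taking a union bound over the $m$ coordinates.

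Fix $i\in[m]$. The key input for this coordinate is Lemma~\ref{lem.variance bound}, which tells me that $\mathbb{E}[e_i^\top y]=(Ax)_i$ and $\operatorname{Var}(e_i^\top y)\leq 2\|A\|_{2\to\infty}^2$. Since the $y_{jk}$ are i.i.d., each batch mean $\overline{y}_k$ satisfies $\mathbb{E}[e_i^\top\overline{y}_k]=(Ax)_i$ with $\operatorname{Var}(e_i^\top\overline{y}_k)\leq 2\|A\|_{2\to\infty}^2/J$. Plugging the hypothesis $J\geq 4e^2\|A\|_{2\to\infty}^2/\gamma^2$ into Chebyshev's inequality, I obtain
\[
p:=\Pr\Big(|e_i^\top\overline{y}_k-(Ax)_i|\geq\gamma\Big)\leq\frac{2\|A\|_{2\to\infty}^2}{J\gamma^2}\leq\frac{1}{2e^2}.
\]

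Next I invoke the standard observation underlying the median of means: if $|\hat\mu_i-(Ax)_i|\geq\gamma$, then more than half of the batch means must also be at least $\gamma$ away from $(Ax)_i$, because the median is sandwiched between the upper and lower halves of the batch means. Letting $Z_k:=\mathbf{1}\{|e_i^\top\overline{y}_k-(Ax)_i|\geq\gamma\}$, which are i.i.d.\ Bernoulli$(p)$ with $p\leq 1/(2e^2)$, it thus suffices to bound $\Pr(\sum_{k\in[K]}Z_k\geq K/2)$. The relative-entropy form of the Chernoff bound gives
\[
\Pr\Big(\textstyle\sum_{k\in[K]}Z_k\geq K/2\Big)\leq\exp\!\big(-K\,D(1/2\,\|\,p)\big)=\exp\!\Big(-\tfrac{K}{2}\log\tfrac{1}{4p(1-p)}\Big).
\]
Since $p\leq 1/(2e^2)$ implies $4p(1-p)\leq 2/e^2$, one has $\tfrac12\log\tfrac{1}{4p(1-p)}\geq\tfrac12\log(e^2/2)=1-\tfrac12\log 2>\tfrac12$. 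Combining this with the hypothesis $K\geq 2\log(m/\eta)$ yields a per-coordinate failure probability of at most $\exp(-K/2)\leq\eta/m$.

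Finally, a union bound over $i\in[m]$ gives $\Pr(\|\hat\mu-Ax\|_\infty\geq\gamma)\leq\eta$, which is the claim. The main obstacle is really just the bookkeeping in the previous paragraph: the numerical constants $4e^2$ and $2$ in the statement are tightly paired with the cutoff $1/2$ used in defining the median, and one has to verify that the Chernoff exponent $D(1/2\,\|\,p)$ at $p=1/(2e^2)$ exceeds $1/2$ so that $K\geq 2\log(m/\eta)$ really does suffice after the union bound. Everything else is a routine chain of Chebyshev, the majority-vote observation, and independence across batches.
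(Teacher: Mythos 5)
Your proposal is correct and follows essentially the same route as the paper: Chebyshev's inequality on each batch mean using the variance bound from Lemma~\ref{lem.variance bound}, the median-majority observation reducing the analysis to counting bad batches, a Chernoff bound on the Binomial$(K,p)$ tail, and a union bound over the $m$ coordinates. The only difference is cosmetic — you invoke the relative-entropy form $\exp(-K\,D(1/2\,\|\,p))$ of the Chernoff bound whereas the paper parametrizes the multiplicative bound by $\lambda$ with $(1+\lambda)Kp=K/2$; both deliver the same $e^{-K/2}$ tail under $p\leq 1/(2e^2)$.
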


\begin{proof}
Fix $i\in[m]$.
For notational convenience, we denote the random variable $Y:=e_i^\top y$.
Given independent copies $\{Y_j\}_{j\in[J]}$ of $Y$, let $\overline{Y}$ denote their sample average.
Then Chebyshev's inequality and Lemma~\ref{lem.variance bound} together imply the deviation inequality
\begin{equation}
\label{eq.chebyshev}
p
:=\mathbb{P}\{|\overline{Y}-(Ax)_i|\geq\gamma\}
=\mathbb{P}\{|\overline{Y}-\mathbb{E}\overline{Y}|\geq\gamma\}
\leq\frac{\operatorname{Var}(\overline{Y})}{\gamma^2}
\leq \frac{2\|A\|_{2\to\infty}^2\|x\|^2}{J\gamma^2}.
\end{equation}
Now take independent copies $\{\overline{Y}_k\}_{k\in[K]}$ of $\overline{Y}$ and put $\hat\mu_i:=\operatorname{median}\{\overline{Y}_k\}_{k\in[K]}$.
Notice that $\hat\mu_i\geq(Ax)_i+\gamma$ only if half of the $\overline{Y}_k$'s satisfy $\overline{Y}_k\geq (Ax)_i+\gamma$.
Similarly, $\hat\mu_i\leq(Ax)_i-\gamma$ only if half of the $\overline{Y}_k$'s satisfy $\overline{Y}_k\leq (Ax)_i-\gamma$.
Thus, defining $I_k:=\mathbf{1}\{|\overline{Y}_k-(Ax)_i|\geq\gamma\}$, we have
\[
\mathbb{P}\{|\hat\mu_i-(Ax)_i|\geq\gamma\}
\leq\mathbb{P}\bigg\{\sum_{k\in[K]}I_k\geq\frac{K}{2}\bigg\}.
\]
Since $\{I_k\}_{k\in[K]}$ are independent Bernoulli random variables with success probability $p$, we may continue with the help of the multiplicative Chernoff bound:
\[
\mathbb{P}\bigg\{\sum_{k\in[K]}I_k\geq(1+\lambda)Kp\bigg\}
\leq\Big(\frac{e^\lambda}{(1+\lambda)^{1+\lambda}}\Big)^{Kp}
=e^{-Kp}\Big(\frac{e}{1+\lambda}\Big)^{(1+\lambda)Kp},
\qquad
\lambda>0.
\]
By our choice of $J$, equation \eqref{eq.chebyshev} implies that $p\leq 1/(2e^2)<1/2$.
As such, there exists $\lambda>0$ such that $(1+\lambda)Kp=K/2$.
Combining the above bounds then gives
\[
\mathbb{P}\{|\hat\mu_i-(Ax)_i|\geq\gamma\}
\leq e^{-Kp}(2ep)^{K/2}
\leq(2ep)^{K/2}
\leq\Big(\frac{4e\|A\|_{2\to\infty}^2\|x\|^2}{J\gamma^2}\Big)^{K/2}
\leq e^{-K/2}
\leq\frac{\eta}{m},
\]
where the last steps follow from our choices for $J$ and $K$.
Finally, since our choice for $i\in[m]$ was arbitrary, the result follows from a union bound.
\end{proof}

\section{Fast preprocessing with Kerdock sets}
\label{sec.kerdock}

Select $k\in2\mathbb{N}$, and consider the real vector space $\mathbb{R}[\mathbb{F}_2^k]$ of functions $f\colon\mathbb{F}_2^k\to\mathbb{R}$.
Calderbank, Cameron, Kantor, and Seidel~\cite{CalderbankCKS:97} describe a projective $2$-design in this space that takes the form of mutually unbiased orthonormal bases (\`{a} la Proposition~\ref{prop.mubs}).
We explicitly construct this projective $2$-design with some help from the underlying finite field, and then we leverage its structure to speed up the preprocessing step of our algorithm.

We say $M\in\mathbb{F}_2^{k\times k}$ is \textbf{skew-symmetric} if $M_{ii}=0$ and $M_{ij}=M_{ji}$ for every $i,j\in[k]$.
Given a skew-symmetric $M$, consider the corresponding upper-triangular matrix $\tilde{M}\in\mathbb{F}_2^{k\times k}$ defined by $\tilde{M}_{ij}:=M_{ij}\cdot\mathbf{1}\{i<j\}$ and the quadratic form $Q_M\colon\mathbb{F}_2^k\times\mathbb{F}_2^k\to\mathbb{F}_2$ defined by $Q_M(x):=x^\top\tilde{M} x$.
Given a skew-symmetric $M\in\mathbb{F}_2^{k\times k}$ and $w\in\mathbb{F}_2^k$, define $u_{M,w}\in\mathbb{R}[\mathbb{F}_2^k]$ by
\begin{equation}
\label{eq.definition of vectors}
u_{M,w}(x)
:=2^{-k/2}(-1)^{Q_M(x)+w^\top x}.
\end{equation}
The following result uses the vectors in \eqref{eq.definition of vectors} to form unbiased orthonormal bases for $\mathbb{R}[\mathbb{F}_2^k]$; this result is contained in~\cite{Kantor:82a,Kantor:82b,CalderbankCKS:97}, but the proof is distributed over dozens of dense pages from multiple papers, so we provide a direct and illustrative proof at the end of this section.

\begin{proposition}
\label{prop.kerdock2mub}
Consider any skew-symmetric $M,M'\in\mathbb{F}_2^{k\times k}$.
\begin{itemize}
\item[(a)]
$\{u_{M,w}\}_{w\in\mathbb{F}_2^k}$ and $\{u_{M',w}\}_{w\in\mathbb{F}_2^k}$ are orthonormal bases.
\item[(b)]
If $M+M'$ has full rank over $\mathbb{F}_2$, then $\{u_{M,w}\}_{w\in\mathbb{F}_2^k}$ and $\{u_{M',w}\}_{w\in\mathbb{F}_2^k}$ are unbiased.
\end{itemize}
\end{proposition}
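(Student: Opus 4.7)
The plan is to reduce everything to two elementary character-sum computations on $\mathbb{F}_2^k$. Throughout I will use the standard identity $(-1)^{a+b}=(-1)^a(-1)^b$ for $a,b\in\mathbb{F}_2=\{0,1\}$, which is valid because passing from sums in $\mathbb{F}_2$ to sums in $\mathbb{Z}$ only changes values by even integers.

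For part (a), I fix $M$ and compute $\langle u_{M,w}, u_{M,w'}\rangle$ directly. Since $(-1)^{Q_M(x)}$ is a sign, the two $Q_M$ factors cancel and the inner product collapses to $2^{-k}\sum_{x\in\mathbb{F}_2^k}(-1)^{(w+w')^\top x}$. This is the standard character-orthogonality sum on $\mathbb{F}_2^k$: it equals $1$ if $w=w'$ and $0$ otherwise. Since there are $2^k$ such unit vectors in the $2^k$-dimensional space $\mathbb{R}[\mathbb{F}_2^k]$, this gives an orthonormal basis.

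For part (b), the first step is to observe that $Q_M+Q_{M'}=Q_{M+M'}$. This follows at once from the formula $Q_M(x)=\sum_{i<j}M_{ij}x_ix_j$, together with the linearity of $M\mapsto \tilde M$. Writing $N:=M+M'$ and $v:=w+w'$, the inner product becomes
\[
\langle u_{M,w}, u_{M',w'}\rangle \;=\; 2^{-k}\sum_{x\in\mathbb{F}_2^k}(-1)^{Q_N(x)+v^\top x},
\]
so the task reduces to showing that this Gauss-type sum $T$ has absolute value $2^{k/2}$ whenever the skew-symmetric matrix $N$ has full rank over $\mathbb{F}_2$.

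The key identity, which I would verify by expanding the sum defining $Q_N$, is $Q_N(x+y)=Q_N(x)+Q_N(y)+x^\top N y$ (the zero diagonal of $N$ is what makes the cross term come out cleanly). The main step is then the usual squaring trick: substitute $y\mapsto x+z$ in $T^2=\sum_{x,y}(-1)^{Q_N(x)+Q_N(y)+v^\top(x+y)}$ to obtain
\[
T^2 \;=\; \sum_{z\in\mathbb{F}_2^k}(-1)^{Q_N(z)+v^\top z}\sum_{x\in\mathbb{F}_2^k}(-1)^{x^\top(Nz)}.
\]
The inner character sum is $2^k\mathbf{1}\{Nz=0\}$, and because $N$ is full rank only the term $z=0$ survives, yielding $T^2=2^k$ and hence $|T|=2^{k/2}$. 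Squaring the inner product then gives $|\langle u_{M,w},u_{M',w'}\rangle|^2 = 2^{-k}$, which is exactly the unbiasedness condition with $d=2^k$.

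The only genuinely nontrivial step is the squaring argument in part (b); everything else is bookkeeping. In particular, the required bilinear identity $Q_N(x+y)-Q_N(x)-Q_N(y)=x^\top Ny$ is what forces us to choose a \emph{skew-symmetric} $M$ with zero diagonal in the definition of $u_{M,w}$, and it is the full-rank hypothesis on $M+M'$ that collapses the squared sum to a single term, so the two hypotheses in the proposition are used in exactly the places one would expect.
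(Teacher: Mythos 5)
Your proof is correct, and it takes a genuinely different and more elementary route than the paper's. The paper proves this proposition via the Heisenberg group $H_{2k+1}(\mathbb{F}_2)$ and its Schr\"{o}dinger representation: it builds group homomorphisms $\psi_{M,w}$, shows that the rank-one projection $P_{M,w}:=2^{-k}\sum_p(\rho\circ\psi_{M,w})(p)$ equals $u_{M,w}u_{M,w}^*$, and then evaluates $|\langle u_{M,w},u_{M',w'}\rangle|^2=\operatorname{tr}(P_{M,w}P_{M',w'})$ using the trace formula for $X_pZ_q$. You instead compute the inner products directly as exponential sums over $\mathbb{F}_2^k$: part (a) collapses to character orthogonality once the two $Q_M$ factors cancel, and part (b) reduces to a Gauss sum $T=\sum_x(-1)^{Q_N(x)+v^\top x}$ with $N=M+M'$ and $v=w+w'$, whose modulus you pin down via the standard squaring trick and the bilinear identity $Q_N(x+y)=Q_N(x)+Q_N(y)+x^\top Ny$ (which is exactly the paper's polarization identity~\eqref{eq.polarization}, derived the same way from the strictly upper-triangular reduction $\tilde N$). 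Both arguments hinge on the same two facts---$Q_M+Q_{M'}=Q_{M+M'}$ and polarization, plus full rank of $M+M'$ killing the cross terms---but your route avoids all of the representation-theoretic scaffolding and is self-contained in about a dozen lines, whereas the paper's route, while heavier, makes the Clifford/Heisenberg structure explicit and packages $u_{M,w}$ as the common $+1$ eigenvector of a maximal commuting family, a viewpoint that is conceptually illuminating (and standard in the stabilizer-state literature) but not strictly necessary for this statement.

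One small caveat worth making explicit: the squaring trick yields $T^2=2^k$, which immediately gives $|T|=2^{k/2}$ only because $T$ is a real number (a sum of $\pm1$'s), so $T^2=|T|^2$; in the complex setting one would instead need $T\overline{T}$. Since everything here is real-valued, your step is fine, but it is worth stating since the unbiasedness condition is phrased in terms of $|\langle\cdot,\cdot\rangle|^2$.
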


A \textbf{Kerdock set} is a collection $\mathcal{K}\subseteq\mathbb{F}_2^{k\times k}$ of $2^{k-1}$ skew-symmetric matrices such that $M+M'$ has full rank for every $M,M'\in\mathcal{K}$ with $M\neq M'$.
(Note that one cannot hope for a larger set with this property since the first row of each matrix in $\mathcal{K}$ must be distinct, and the first entry of these rows must equal zero.)
By Proposition~\ref{prop.kerdock2mub} and equation~\eqref{eq.definition of vectors}, a Kerdock set $\mathcal{K}$ determines $2^{k-1}+1$ mutually unbiased bases in $\mathbb{R}[\mathbb{F}_2^k]$, namely, $\{u_{M,w}\}_{w\in\mathbb{F}_2^k}$ for each $M\in\mathcal{K}$ and the identity basis $\{e_w\}_{w\in\mathbb{F}_2^k}$.
By Proposition~\ref{prop.mubs}, these bases combine to form a projective $2$-design.
Below, we give the ``standard'' Kerdock set described in Example~9.2 in~\cite{CalderbankCKS:97}.
(Note that \cite{CalderbankCKS:97} contains a typo and omits the proof of this result; specifically, when they write $\alpha x$, it should be $ax$; we will use $s$ instead of $a$ so as to clearly distinguish from $\alpha$, and for completeness, we supply a proof at the end of this section.)
In what follows, $\operatorname{tr}\colon\mathbb{F}_{2^{k-1}}\to\mathbb{F}_2$ denotes the field trace, while for any finite set $B$, we let $\mathbb{F}_2^{B\times B}$ denote the set of $|B|\times|B|$ matrices with entries in $\mathbb{F}_2$ whose rows and columns are indexed by $B$.

\begin{proposition}
\label{prop.kerdock construction}
Consider the $k$-dimensional vector space $V:=\mathbb{F}_{2^{k-1}}\times\mathbb{F}_2$ over the scalar field $\mathbb{F}_2$, and for each $s\in\mathbb{F}_{2^{k-1}}$, define the linear map $L_s\colon V\to V$ by
\[
L_s(x,\alpha)
:=(s^2x+s\operatorname{tr}(sx)+\alpha s,\operatorname{tr}(sx)).
\]
Next, select a basis $B$ for $V$, and consider the bilinear form
\[
(x,\alpha)\cdot(y,\beta)
:=\operatorname{tr}(xy)+\alpha\beta.
\]
Then $\{M_s\in\mathbb{F}_2^{B\times B}:s\in\mathbb{F}_{2^{k-1}}\}$ defined by $(M_s)_{b,b'}:=b\cdot L_s(b')$ is a Kerdock set.
\end{proposition}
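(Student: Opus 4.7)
The plan is to verify the three defining properties of a Kerdock set for $\{M_s\}_{s\in\mathbb{F}_{2^{k-1}}}$: cardinality $2^{k-1}$, skew-symmetry of each $M_s$, and full rank of $M_s+M_{s'}$ whenever $s\neq s'$. The cardinality follows automatically from the full-rank property (distinct $s$ then yield distinct $M_s$), so the real content is the other two.

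For skew-symmetry, I would expand $(M_s)_{b,b}=b\cdot L_s(b)$ and $(M_s)_{b,b'}-(M_s)_{b',b}=b\cdot L_s(b')-b'\cdot L_s(b)$ using the explicit formula for $L_s$ and the bilinear form. The off-diagonal difference becomes a sum of terms that are manifestly symmetric in $(b,b')$, hence cancels in characteristic two. The diagonal vanishing reduces to the identity $\operatorname{tr}((sx)^2)+\operatorname{tr}(sx)^2=0$, which holds because the Frobenius $y\mapsto y^2$ fixes the trace $\mathbb{F}_{2^{k-1}}\to\mathbb{F}_2$ and squaring acts trivially on $\mathbb{F}_2$.

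For the rank condition, the first step is to pass from matrices to linear maps. Since the bilinear form $(b,b')\mapsto b\cdot b'$ is non-degenerate on $V$, the matrix $M_s+M_{s'}$, whose entries realize the bilinear form $(b,b')\mapsto b\cdot(L_s+L_{s'})(b')$, has full rank if and only if the linear map $L_s+L_{s'}\colon V\to V$ is invertible. So it suffices to show this map has trivial kernel when $s\neq s'$. Set $u:=s+s'\neq 0$ and suppose $(L_s+L_{s'})(x,\alpha)=0$. The $\mathbb{F}_2$-component reads $\operatorname{tr}(ux)=0$, whence $a:=\operatorname{tr}(sx)=\operatorname{tr}(s'x)\in\mathbb{F}_2$. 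The $\mathbb{F}_{2^{k-1}}$-component then collapses to $u^2 x=(a+\alpha)u$, i.e., $ux=a+\alpha\in\mathbb{F}_2$. At this point the parity hypothesis $k\in 2\mathbb{N}$ enters decisively: $\operatorname{tr}(1)=k-1\equiv 1\pmod 2$, so the trace restricted to $\mathbb{F}_2\subseteq\mathbb{F}_{2^{k-1}}$ is the identity, and $0=\operatorname{tr}(ux)=ux$ forces $x=0$, and then $\alpha=a=0$.

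The main obstacle is this kernel calculation, where one must carefully track which quantities live in $\mathbb{F}_2$ versus $\mathbb{F}_{2^{k-1}}$, and where the assumption that $k$ is even is not cosmetic but precisely what is needed to force $ux=0$ at the last step. Everything else in the proof is routine bookkeeping with the explicit formula for $L_s$.
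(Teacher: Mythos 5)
Your proof is correct and follows essentially the same route as the paper's: reduce skew-symmetry of $M_s$ to the identities $(x,\alpha)\cdot L_s(x,\alpha)=0$ and $(x,\alpha)\cdot L_s(y,\beta)=L_s(x,\alpha)\cdot(y,\beta)$, and reduce the rank condition to injectivity of $L_s+L_{s'}$ via non-degeneracy of the form. One small stylistic difference: in the kernel computation the paper factors out $(r+s)$ and then squares the resulting $\mathbb{F}_2$-valued identity to deduce $(r+s)x\in\{0,1\}$, whereas you substitute $\operatorname{tr}(sx)=\operatorname{tr}(s'x)$ up front and divide by $u=s+s'$ directly to get $ux=a+\alpha\in\mathbb{F}_2$, which is a bit cleaner. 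You also make explicit that the final contradiction hinges on $\operatorname{tr}(1)=k-1\equiv 1\pmod 2$; the paper uses the same fact but leaves it implicit in the sentence ``Since $x$ also satisfies $\operatorname{tr}((r+s)x)=0$, it follows that $x=0$.'' Spelling this out is a genuine improvement in readability, since it pinpoints exactly where the hypothesis $k\in 2\mathbb{N}$ is used.
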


Importantly, Kerdock sets provide speedups for the preprocessing step of our algorithm:

\begin{lemma}
\label{lem.kerdock preprocessing}
Given $A\in\mathbb{R}^{m\times n}$, select the smallest $k\in2\mathbb{N}$ satisfying $d:=2^k\geq n$.
Let $\mathcal{K}\subseteq\mathbb{F}_2^{k\times k}$ denote the Kerdock set described in Proposition~\ref{prop.kerdock construction}, consider $u_{M,w}\in\mathbb{R}[\mathbb{F}_2^k]$ defined by \eqref{eq.definition of vectors}, let $\{e_w\}_{w\in\mathbb{F}_2^k}$ denote the identity basis in $\mathbb{R}[\mathbb{F}_2^k]$, and let $\Pi\colon\mathbb{R}[\mathbb{F}_2^k]\to\mathbb{R}^n$ denote any coordinate projection.
Then
\[
\{A\sqrt{d}\Pi u_{M,w}\}_{M\in\mathcal{K},w\in\mathbb{F}_2^k}\cup\{A\sqrt{d}\Pi e_w\}_{w\in\mathbb{F}_2^k}
\]
can be computed in $O(mn^2\log n+n\log^2n)$ operations.
\end{lemma}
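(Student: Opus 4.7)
The plan is to identify $\mathbb{R}[\mathbb{F}_2^k]$ with $\mathbb{R}^d$ via the natural ordering of $\mathbb{F}_2^k$, pad $A$ to an $m\times d$ matrix $\tilde A := A\Pi$ (so $A\sqrt d\Pi v=\tilde A\sqrt d\,v$ for every $v\in\mathbb{R}[\mathbb{F}_2^k]$), and then exploit the multiplicative structure in \eqref{eq.definition of vectors}. Explicitly, for each skew-symmetric $M\in\mathcal{K}$, collect the vectors $\sqrt d\,u_{M,w}$ for $w\in\mathbb{F}_2^k$ as the columns of a $d\times d$ matrix $U_M$. Writing $\sqrt d\,u_{M,w}(x)=(-1)^{Q_M(x)}\cdot(-1)^{w^\top x}$, one sees that
\[
U_M=D_M H,
\]
where $H\in\{\pm 1\}^{d\times d}$ is the Walsh--Hadamard matrix with entries $H(x,w)=(-1)^{w^\top x}$ and $D_M$ is the diagonal sign matrix with $D_M(x,x)=(-1)^{Q_M(x)}$. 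Thus the block of outputs indexed by $M$ is simply $\tilde A D_M H$.

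The computation for a single $M$ then proceeds in three stages. First, build the diagonal of $D_M$: this means evaluating $Q_M(x)=x^\top\tilde M x$ at each of the $d$ points $x\in\mathbb{F}_2^k$, which takes $O(k^2)=O(\log^2 d)$ bit operations per point and $O(d\log^2 d)$ in total. Second, form $\tilde A D_M$ by flipping signs of selected columns of $\tilde A$, at cost $O(md)$. Third, compute $(\tilde A D_M)H$ by applying the fast Walsh--Hadamard transform to each of the $m$ rows, for a cost of $O(md\log d)$. Ranging over the $|\mathcal{K}|=d/2$ elements of the Kerdock set, the aggregated cost is $O(md^2\log d+d^2\log^2 d)$, which is $O(mn^2\log n+n^2\log^2 n)$ because $d\leq 4n$ by the minimality of $k$. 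The identity-basis pieces $A\sqrt d\,\Pi e_w$ are either $\sqrt d$ times a column of $A$ or zero, and can be produced essentially for free in the same budget.

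The conceptual crux is the factorization $U_M=D_M H$; once it is recognized, the only nontrivial ingredient is that the Walsh--Hadamard matrix $H$ admits a length-$d$ FFT-style algorithm costing $O(d\log d)$ per vector, which is standard. There is essentially no obstacle beyond being careful that the Hadamard ordering matches the identification $\mathbb{R}[\mathbb{F}_2^k]\cong\mathbb{R}^d$ used to index the rows and columns of $U_M$ and $\tilde A$, and that computing each $Q_M$ does not dominate; both points are handled by the explicit formula in \eqref{eq.definition of vectors}.
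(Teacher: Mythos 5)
Your proof follows essentially the same route as the paper: you identify $\mathbb{R}[\mathbb{F}_2^k]$ with $\mathbb{R}^d$, recognize the factorization of the $M$-block of design vectors as $D_M H$ (the paper writes this as $D_M H^{\otimes k}$ with a normalized $2\times 2$ Hadamard factor), apply the fast Walsh--Hadamard transform to each of the $m$ rows of $A\Pi D_M$, and sum the per-$M$ costs over the $d/2$ members of the Kerdock set. One remark: your total of $O(mn^2\log n + n^2\log^2 n)$ is in fact the correct aggregate and matches what the paper's own introduction states; the bound $O(mn^2\log n + n\log^2 n)$ printed in Lemma~\ref{lem.kerdock preprocessing} appears to have a typo in the second term.
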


\begin{proof}
We identify each member of $\mathbb{R}[\mathbb{F}_2^k]$ as a vector in $\mathbb{R}^d$ with entries indexed by $\mathbb{F}_2^k$.
By this identification, the vectors $\{u_{M,w}\}_{w\in\mathbb{F}_2^k}$ appear as the columns of the matrix product $D_MH^{\otimes k}$, where $\cdot^{\otimes k}$ denotes the Kronecker power and $D_M,H\in\mathbb{R}^{d\times d}$ are defined by
\[
D_M:=\operatorname{diag}\{(-1)^{Q_M(x)}\}_{x\in\mathbb{F}_2^k},
\qquad
H_{ab}:=2^{-1/2}(-1)^{ab},
\qquad
a,b\in\mathbb{F}_2.
\]
Let $a_i^\top$ denote the $i$th row of $A$.
Then the following algorithm runs in the claimed number of operations:
For each $M\in\mathcal{K}$, (i) compute $\{(-1)^{Q_M(x)}\}_{x\in\mathbb{F}_2^k}$ in $O(n\log^2n)$ operations, (ii) for each $i\in[m]$, use the fast Walsh--Hadamard transform to compute $((\Pi^*a_i)^\top D_M)H^{\otimes k}$ in $O(n\log n)$ operations, and (iii) compute the columns of
\[
\sqrt{d}\sum_{i\in[m]}e_i(a_i^\top \Pi D_MH^{\otimes k})
=A\sqrt{d}\Pi D_MH^{\otimes k}
\]
in $O(mn)$ operations; finally, compute $\{A\sqrt{d}\Pi e_w\}_{w\in\mathbb{F}_2^k}$ in $O(mn)$ operations.
\end{proof}

\begin{proof}[Proof of Proposition~\ref{prop.kerdock2mub}]
Define the \textit{Heisenberg group} $H_{2k+1}(\mathbb{F}_2)$ to be the set $\mathbb{F}_2^k\times\mathbb{F}_2^k\times\mathbb{F}_2$ with multiplication
\[
(p,q,\epsilon)\cdot(p',q',\epsilon')
:=(p+p',q+q',\epsilon+\epsilon'+q^\top p').
\]
Fix any skew-symmetric $M\in\mathbb{F}_2^{k\times k}$.
For each $w\in\mathbb{F}_2^k$, we define $\psi_{M,w}\colon\mathbb{F}_2^k\to H_{2k+1}(\mathbb{F}_2)$ by
\[
\psi_{M,w}(p)
:=(p,Mp,Q_M(p)+w^\top p).
\]
One may apply the polarization identity
\begin{equation}
\label{eq.polarization}
Q_M(x+y)
=Q_M(x)+x^\top\tilde{M}y+y^\top\tilde{M}x+Q_M(y)
=Q_M(x)+x^\top My+Q_M(y).
\end{equation}
to verify that $\psi_{M,w}$ is a group homomorphism.

Next, define the \textit{Schr\"{o}dinger representation} $\rho\colon H_{2k+1}(\mathbb{F}_2)\to\operatorname{GL}(\mathbb{R}[\mathbb{F}_2^k])$ by
\[
\rho(p,q,\epsilon)
:=(-1)^\epsilon X_pZ_q,
\]
where $X_p$ and $Z_q$ denote translation and modulation operators, respectively:
\begin{equation}
\label{eq.def X and Z}
(X_pf)(x):=f(x+p),
\qquad
(Z_qf)(x):=(-1)^{q^\top x}f(x).
\end{equation}
Indeed, $\rho$ is a representation of $H_{2k+1}(\mathbb{F}_2)$ as a consequence of the easily verified relation
\begin{equation}
\label{eq.price of commuting}
Z_qX_p
=(-1)^{q^\top p}X_pZ_q.
\end{equation}
It follows that $\rho\circ\psi_{M,w}$ is a representation of the additive group $\mathbb{F}_2^k$.
Explicitly, we have
\begin{equation}
\label{eq.def Sp}
(\rho\circ\psi_{M,w})(p)
=(-1)^{Q_M(p)+w^\top p}X_pZ_{Mp}.
\end{equation}
Next, we put
\begin{equation}
\label{eq.def PMw}
P_{M,w}
:=2^{-k}\sum_{p\in\mathbb{F}_2^k}(\rho\circ\psi_{M,w})(p). 
\end{equation}
Then $P_{M,w}^2=P_{M,w}$ and $P_{M,w}^*=P_{M,w}$, and so $P_{M,w}$ is an orthogonal projection.
Decomposing into irreducible representations reveals that $\operatorname{im}P_{M,w}$ equals the intersection of the eigenspaces of $\{(\rho\circ\psi_{M,w})(p)\}_{p\in\mathbb{F}_2^k}$ with eigenvalue $1$.

We claim that $P_{M,w}=u_{M,w}u_{M,w}^*$.
To see this, one may first apply the definitions \eqref{eq.def Sp}, \eqref{eq.def X and Z}, and \eqref{eq.definition of vectors} along with \eqref{eq.polarization} and its consequence
\begin{equation}
\label{eq.isotropy}
x^\top Mx
=Q_M(x+x)+Q_M(x)+Q_M(x)
=0
\end{equation}
to verify the eigenvector equation
\[
[(\rho\circ\psi_{M,w})(p)]u_{M,w}=u_{M,w}
\]
for each $p\in\mathbb{F}_2^k$.
Next, one may apply the easily verified fact that
\begin{equation}
\label{eq.traces}
\operatorname{tr}(X_pZ_q)
=\left\{\begin{array}{cl}
2^k&\text{if }p=q=0\\
0&\text{otherwise}
\end{array}\right.
\end{equation}
to the definitions \eqref{eq.def PMw} and \eqref{eq.def Sp} to compute $\operatorname{tr}P_{M,w}=1$.
This proves our intermediate claim.

We are now ready to compute $|\langle u_{M,w},u_{M',w'}\rangle|^2$ in various cases.
First, we cycle the trace:
\[
|\langle u_{M,w},u_{M',w'}\rangle|^2
=\operatorname{tr}(u_{M,w}^*u_{M',w'}u_{M',w'}^*u_{M,w})
=\operatorname{tr}(u_{M,w}u_{M,w}^*u_{M',w'}u_{M',w'}^*)
=\operatorname{tr}(P_{M,w}P_{M',w'}).
\]
Next, we apply the definitions \eqref{eq.def PMw} and \eqref{eq.def Sp} along with identities \eqref{eq.price of commuting}, \eqref{eq.traces}, and \eqref{eq.isotropy} to obtain
\[
|\langle u_{M,w},u_{M',w'}\rangle|^2
=\operatorname{tr}(P_{M,w}P_{M',w'})
=2^{-k}\sum_{p\in\operatorname{ker}(M+M')}(-1)^{Q_M(p)+Q_{M'}(p)+(w+w')^\top p}.
\]
From here, we proceed in cases:
If $M=M'$, then $\operatorname{ker}(M+M')=\mathbb{F}_2^k$ and $Q_M(p)=Q_{M'}(p)$, and so $|\langle u_{M,w},u_{M,w'}\rangle|^2=\delta_{w,w'}$.
This gives (a).
For (b), if $M+M'$ has full rank, then $\operatorname{ker}(M+M')=\{0\}$, and so $|\langle u_{M,w},u_{M,w'}\rangle|^2=2^{-k}$, as claimed.
\end{proof}

\begin{proof}[Proof of Proposition~\ref{prop.kerdock construction}]
In what follows, we demonstrate three things:
\begin{itemize}
\item[(i)]
For every $s,x\in\mathbb{F}_{2^{k-1}}$ and $\alpha\in\mathbb{F}_2$, it holds that $(x,\alpha)\cdot L_s(x,\alpha)=0$.
\item[(ii)]
For every $s,x,y\in\mathbb{F}_{2^{k-1}}$ and $\alpha,\beta\in\mathbb{F}_2$, it holds that $(x,\alpha)\cdot L_s(y,\beta)=L_s(x,\alpha)\cdot (y,\beta)$.
\item[(iii)]
For every $r,s\in\mathbb{F}_{2^{k-1}}$ with $r\neq s$, $L_r(x,\alpha)=L_s(x,\alpha)$ implies $(x,\alpha)=(0,0)$.
\end{itemize}
Thanks to the non-degeneracy of the bilinear form, (i)--(iii) together imply the result.

First, (i) is easily verified by applying three properties of the trace: $\operatorname{tr}$ is $\mathbb{F}_2$-linear, $\operatorname{tr}(z^2)=\operatorname{tr}(z)$, and $\operatorname{tr}(z)^2=\operatorname{tr}(z)$ since $\operatorname{tr}(z)\in\mathbb{F}_2$.
Also, (ii) quickly follows from the linearity of the trace.
For (iii), take $r,s\in\mathbb{F}_{2^{k-1}}$ with $r\neq s$ and suppose $L_r(x,\alpha)=L_s(x,\alpha)$.
The second argument of this identity is $\operatorname{tr}(rx)=\operatorname{tr}(sx)$.
Rearrange the first argument to get
\[
0
=(r^2+s^2)x+r\operatorname{tr}(rx)+s\operatorname{tr}(sx)+\alpha(r+s)
=(r+s)\Big((r+s)x+\operatorname{tr}(rx)+\alpha\Big),
\]
where the last step applies the fact that $\operatorname{tr}(rx)=\operatorname{tr}(sx)$.
Since $r+s\neq0$ by assumption, it follows that
\begin{equation}
\label{eq.linear condition on x and alpha}
(r+s)x+\operatorname{tr}(rx)
=\alpha
\in\mathbb{F}_2.
\end{equation}
Since $\alpha^2=\alpha$, the left-hand side of \eqref{eq.linear condition on x and alpha} satisfies the same quadratic, which in turn implies $x\in\{0,(r+s)^{-1}\}$.
Since $x$ also satisfies $\operatorname{tr}((r+s)x)=0$, it follows that $x=0$.
Plugging into \eqref{eq.linear condition on x and alpha} then gives $\alpha=0$, as desired.
\end{proof}

\section{Numerical results}
\label{sec.numerics}

\begin{figure}[t]
\begin{tabular}{ccc}
$\|\hat{\mu}-Ax\|_\infty$
&
$\|\hat{h}-Ax\|_2$
&
ratio of runtimes
\\
\includegraphics[width=0.3\textwidth]{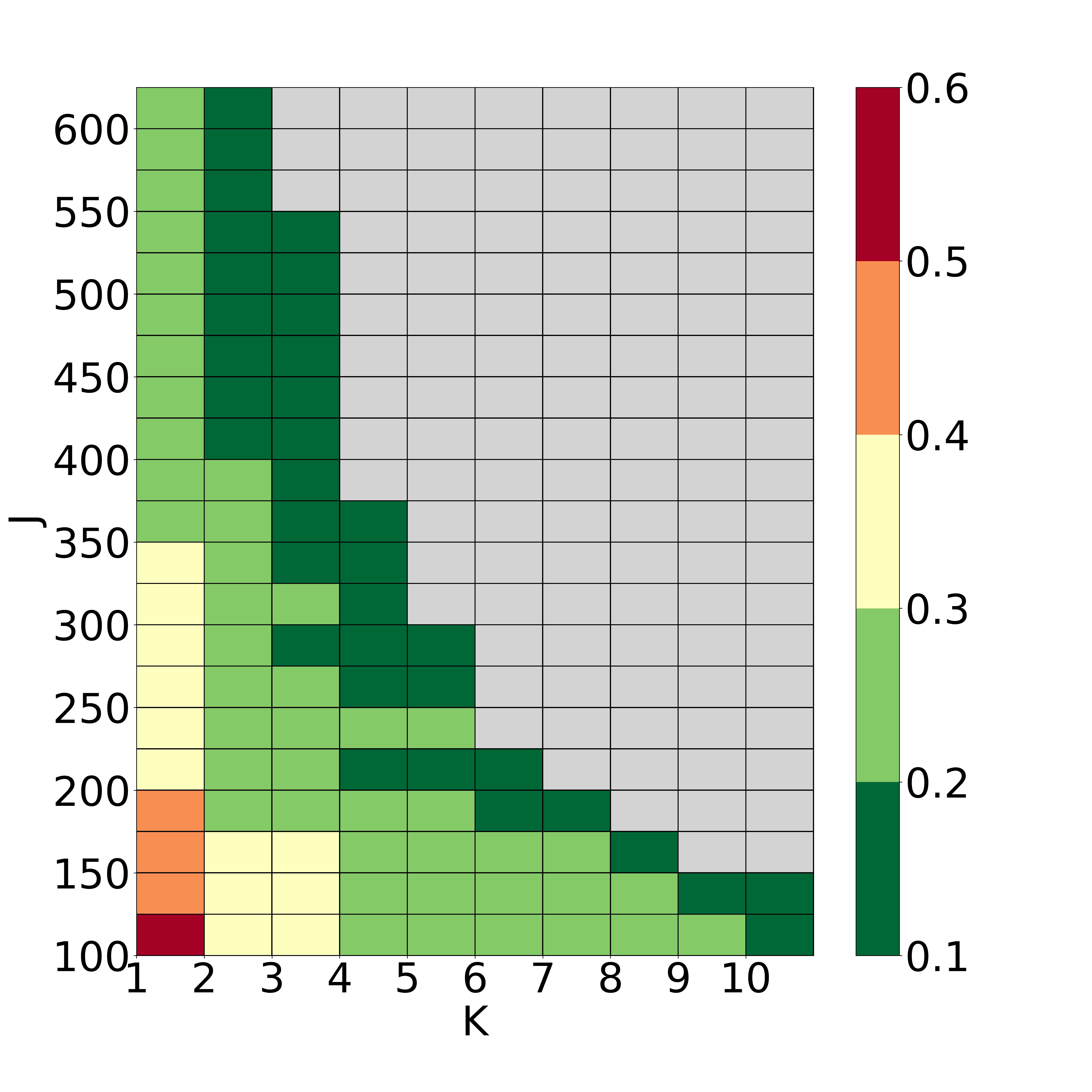}
&
\includegraphics[width=0.3\textwidth]{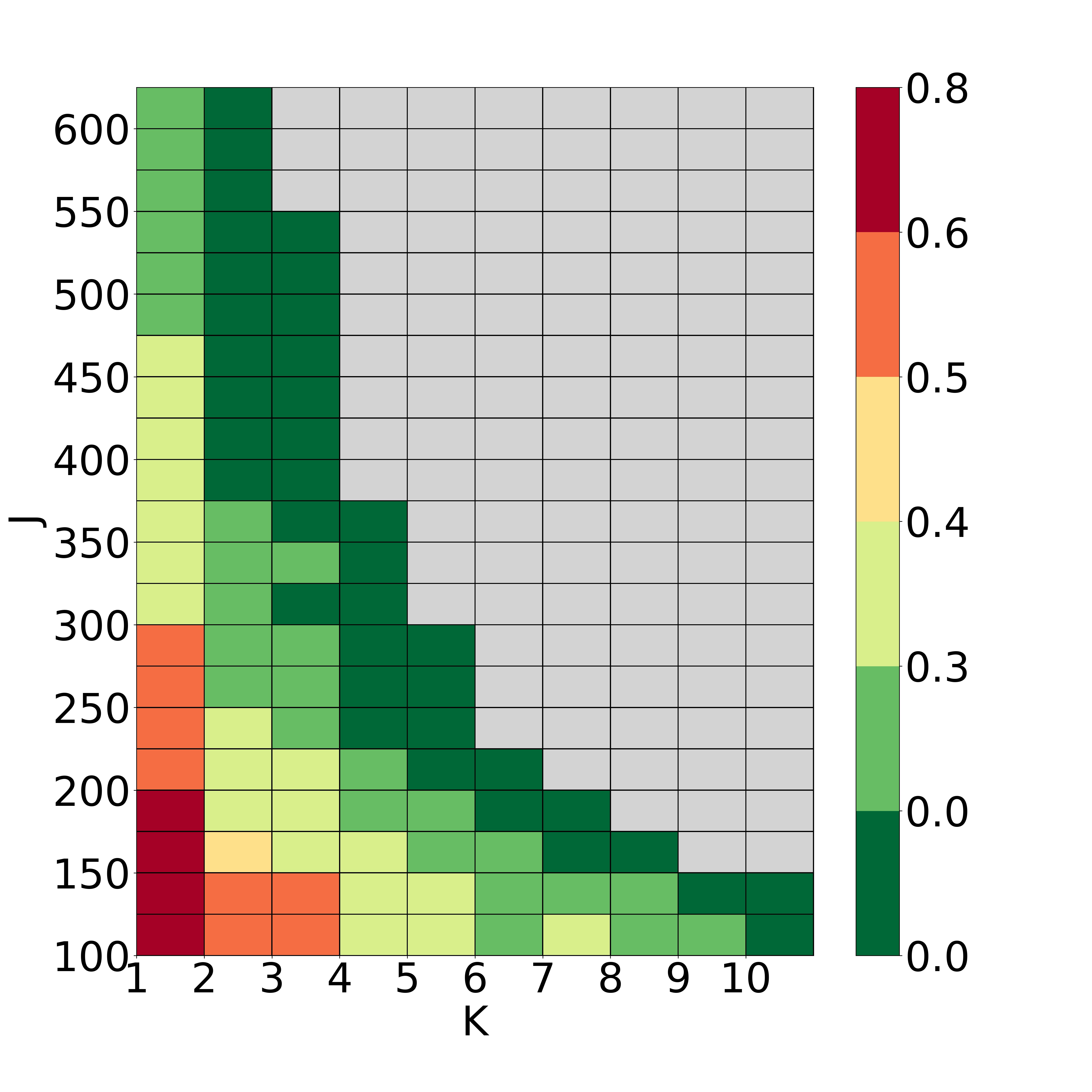}
&
\includegraphics[width=0.3\textwidth]{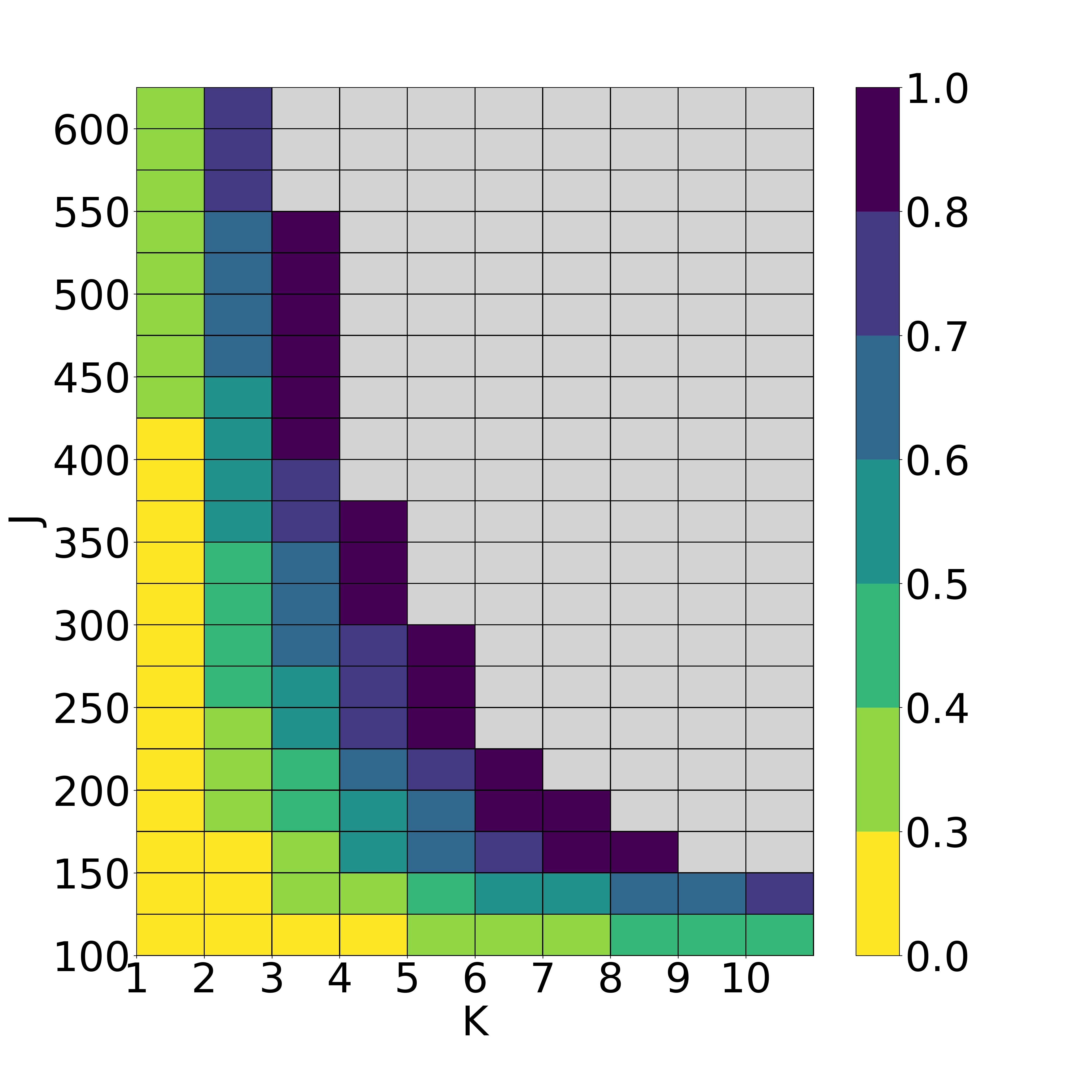}
\end{tabular}
\caption{
Performance of fast sparsifying transform for a random orthogonal matrix $A\in\mathbb{R}^{n\times n}$ with $n=4096$ and vectors $x\in\Sigma(A,s,\delta)$ with $s=20$ and $\delta=0$.
\textbf{(left)}
We compute the median-of-means estimator $\hat\mu\in\mathbb{R}^n$ and plot the worst-case behavior over $1000$ trials.
\textbf{(middle)}
We compute $\hat{h}\in\mathbb{R}^n$ by multiplying $x$ by the rows of $A$ corresponding to the $10s$ largest-magnitude entries of $\hat\mu$, and we plot the worst-case behavior over $1000$ trials.
\textbf{(right)}
We plot the quotient of our algorithm's runtime with the runtime of naive matrix--vector multiplication.
(Here, we ignore the runtime of randomly sampling $N$ vectors from the precomputed sketch $\{As_\ell\}_{\ell\in[L]}$ since optimizing data structures is beyond the scope of this paper.)
Throughout, gray denotes choices of parameters for which our algorithm is no faster than the naive algorithm.
}
\label{figure}
\end{figure}

In this section, we report the real-world performance of our fast sparsifying transform.
In our experiments, we take $A\in\mathbb{R}^{n\times n}$ to be a random orthogonal matrix, and then we select a unit vector $x$ such that $Ax$ has exactly $s$ nonzero entries of the same size in random positions.
(This is straightforward to implement since $A^{-1}=A^\top$.)
Due to limitations in computing power and storage capacity, we restrict our experiments to dimension $n=4096$, and we select sparsity level $s=20$.

What follows are some details about our implementation.
Since $n$ is a power of $2$, we have $d=n$, and so our projective $2$-design has size $L=n(n/2+1)=8392704$.
For such a large value of $L$, it turns out that the runtime of selecting $N$ random members of the precomputed sketch $\{As_\ell\}_{\ell\in[L]}$ is sensitive to the design of the underlying data structure.
In order to provide a useful runtime comparison, we therefore assume that this random selection is performed by an oracle before we start the runtime clock in our algorithm.
To compute medians, we apply the quickselect algorithm~\cite{MahmoudHM:95}.
Finally, to boost performance, we take $S\subseteq[n]$ to index the $10s$ largest-magnitude entries of $\hat\mu$ instead of the top $s$ entries.

The results of our experiments are summarized in Figure~\ref{figure}.
Figure~\ref{figure}(left) illustrates that the median-of-means estimator $\hat\mu$ performs better in practice than predicted by Theorem~\ref{thm.median of means}.
In particular, we can take $J$ and $K$ to be smaller than suggested by the bounds in our guarantee, which is good for runtime considerations.
(In fact, taking $\eta=1$ in Theorem~\ref{thm.median of means} delivers a lower bound on $K$ that is greater than $16$, meaning our theoretical guarantees are off the scale in this plot.) 
Figure~\ref{figure}(middle) illustrates the performance of our entire algorithm for different choices of $J$ and $K$.
Notably, we perfectly computed $Ax$ in all of our $1000$ random trials when $K=2$ and $J=375$.
Figure~\ref{figure}(right) illustrates the runtime of our algorithm relative to naive matrix--vector multiplication.
For this plot, we divide the runtime of our algorithm by the runtime of the naive algorithm.
Throughout, gray denotes choices of $(K,L)$ for which our algorithm provides no speedup over the naive algorithm.
In particular, when $K=2$ and $J=375$, our method is about twice as fast as the naive approach.
Interestingly, the median-of-means estimator reduces to the empirical mean when $K=2$; this might suggest an opportunity to improve our theory, and perhaps even speed up our algorithm.

\section*{Acknowledgments}

We are grateful to Peter Jung for interesting discussions.
TF and FK were partially supported by the German Research Council (DFG) via contract KR 4512/2-2, DG was partially supported by the German Research Council (DFG) via contract GR4334/1-1, and DGM was partially supported by AFOSR FA9550-18-1-0107 and NSF DMS 1829955.

\end{document}